\documentclass[11pt]{article}
\usepackage[margin=1.1in]{geometry}
\usepackage{microtype}
\usepackage{amsmath,amssymb,amsthm}
\usepackage{enumitem}
\usepackage{xcolor}
\usepackage{todonotes}
\usepackage{booktabs}
\usepackage{tikz}
\usepackage{verbatim}
\usepackage{cite}

\theoremstyle{plain}
\newtheorem{theorem}{Theorem}
\newtheorem{lemma}{Lemma}

\theoremstyle{definition}

\theoremstyle{remark}

\usetikzlibrary{arrows.meta,calc}

\usepackage[dvipsnames]{xcolor}

\usepackage{hyperref}
\hypersetup{colorlinks=true, urlcolor=Blue, citecolor=Green, linkcolor=BrickRed, breaklinks, unicode}

\usepackage[capitalize,nameinlink]{cleveref}
\hypersetup{
  pdftitle={A Tight Bound for Facial Distance Patterns in Planar Graphs},
  pdfauthor={Viktor Fredslund-Hansen, Shay Mozes, Oren Weimann},
  pdfsubject={Planar graph distance patterns, forbidden alternation, Okamura--Seymour metric compression, planar diameter}
}
\setlist[itemize]{leftmargin=2em,itemsep=0.25em,topsep=0.4em}
\setlist[enumerate]{leftmargin=2.2em,itemsep=0.25em,topsep=0.4em}

\newcommand{\bits}{\{-1,+1\}}

\newcommand{\cF}{\mathcal{F}}

\newcommand{\Otilde}{\tilde{O}}

\newcommand{\x}{\ensuremath p_\#}

\date{}

\title{A Tight Bound for Facial Distance Patterns in Planar Graphs}
\author{
Viktor Fredslund-Hansen
\and
Shay Mozes
\and
Oren Weimann
}
\begin{document}
\maketitle
\begin{abstract}
Let $G$ be an undirected unweighted planar graph and let $S=(s_0,\dots,s_{k-1})$ be the vertices of a designated face, listed in cyclic order. 
Consider a vector that stores the distances from an arbitrary vertex $v$ to all vertices of $S$. The {\em pattern} of $v$ is obtained by taking the difference between every pair of consecutive values in this vector.
Li and Parter [STOC'19] proved an upper bound of $O(k^3)$ on the number of unique patterns over all vertices of $G$.  
We improve this to $O(k^2)$, matching a known lower bound and settling a conjecture in [ISAAC'22]. 
The simple proof was found by OpenAI's GPT 5.6-Sol model. 

Plugging this new bound into known results has the following three immediate implications for undirected unweighted planar graphs: (1) it gives an improved  compression of the Okamura--Seymour metric, (2) it improves the space required by constant-time exact distance oracles, and (3) it improves the fastest distributed algorithm for computing the diameter.     
    We further present a previously unknown and nontrivial implication: 
    a (centralized) $\Otilde(n^{8/5})$-time algorithm for computing the diameter, improving over the $\Otilde(n^{5/3})$ algorithm of [SODA'18] which works for weighted directed planar graphs. 
Thus, there is currently a gap between the time for computing the diameter between weighted and unweighted planar graphs. 
\end{abstract}

\section{Introduction}\label{sec:intro}
Let $G$ be an undirected unweighted planar graph and let $S=(s_0,\dots,s_{k-1})$ be the vertices of a designated face, listed in cyclic order. The \emph{pattern} $p^v$ of a vertex $v$ in $G$ is the vector  $p^v = \left \langle d_G(v,s_1) - d_G(v,s_0) , d_G(v,s_2) - d_G(v,s_1) , \ldots , d_G(v,s_{k-1}) - d_G(v,s_{k-2}) \right \rangle$ where $d_G(x,y)$ denotes the $x$-to-$y$ distance in $G$. Let $\x$ denote the number of unique patterns over all vertices of $G$. 
Li and Parter \cite{LP19} proved that $\x=O(k^3)$, and used this bound to obtain an $\Otilde(D^5)$-round algorithm for computing the diamater in the CONGEST model of distributed computation. Their bound was later used by Fredslund-Hansen, Mozes, and Wulff-Nilsen~\cite{FHMWN21} to construct exact distance oracles with constant query time and subquadratic space for unweighted undirected planar graphs. Mozes, Wallheimer, and Weimann \cite{MWW22} gave a simple $\Omega(k^2)$ lower bound and conjectured that the upper bound should be $O(k^2)$. 
This problem and conjecture was recently mentioned by Da Wei Zheng as an important open problem in a Dastuhl seminar on Metric Sketching and Dynamic Algorithms for Geometric and Topological Graphs\cite[Open problem 5.6]{DagRep}.
We provide a simple proof of this conjecture, which we found by a single prompt to OpenAI’s GPT 5.6-Sol model.
It is surprising (not to say embarrasing) that this open problem has such a simple proof, which has eluded the community despite the human efforts invested in it.

\medskip
\noindent
{\bf The Li-Parter proof.}
Since the graph is unweighted and undirected, every entry of $p^v$ is in $\{-1,0,1\}$ by the triangle inequality.
The patterns can be transformed\footnote{This transformation was presented in \cite{MWW22} and in the conference talk of \cite{LP19}.} to be binary (i.e. over $\{-1,1\}$ instead of $\{-1,0,1\}$) vectors of dimension $2k-1$ by  subdividing every edge of the graph. 
Li and Parter's proof is based on the simple observation that there cannot be two vertices $v$ and $u$ and 4 indices $a<b<c<d$ such that ${p}^u_a = -1, {p}^u_b = 1, {p}^u_c = -1, {p}^u_d=1$ but ${p}^v_a = 1, {p}^v_b = -1, {p}^v_c = 1, {p}^v_d=-1$. The reason is that such forbidden $(-1,1,-1,1),(1,-1,1,-1)$ induced patterns correspond to an illegal configuration of shortest paths in planar graphs.

Now, consider the patterns of all vertices of $G$ as the rows of a binary matrix $P$. The {\em VC-dimension} of $P$ is the largest number $d$ such that there exists in $P$ a submatrix of $d$ columns that contains all possible $2^d$ different rows. 
The above forbidden configuration implies that the VC-dimension of $P$ is at most $3$. By Sauer's Lemma~\cite{Sauer72}, this means that there are $O((2k-1)^3) = O(k^3)$ distinct rows. This is their entire proof.

\medskip
\noindent
{\bf Limitations of the Li-Parter proof.}
The following set of sequences over $\{-1,1\}^{k-1}$  was observed by \cite{MWW22}:
$
\left \{ (-1)^{x_1} \circ 1^{x_2} \circ (-1)^{x_3} \circ 1^{k-1-x_1-x_2-x_3} \; | \; x_1 + x_2 + x_3 < k\right \}.
$
There is no pair of sequences in this set that contains the forbidden $(1,-1,1,-1),(-1,1,-1,1)$ configuration, and yet its cardinality is $\Theta(k^3)$. This shows that VC-dimension alone is not enough to break the $O(k^3)$ upper bound.

\medskip
\noindent
{\bf The new proof.}
We prompted ChatGPT with a description of the state of the art and asked for any improvement on the upper bound. 
It replied that the answer is $\x = O(k^2)$. Specifically, it noted that since the sequence of vertices along the distinguished face is cyclic, the entries of the pattern (as a vector in $\bits^{2k-1}$) must sum up to either 1 or -1 because if we added one more term that closes the cycle back to $s_0$, the sum would telescope to 0. It follows that the number of $1$'s in any pattern is either $k$ or $k-1$. 
This property was previously observed, but has not been exploited.
It then pointed out that sets of vectors in $\bits^{2k-1}$ that avoid the forbidden configuration and have the same number of 1's are in fact what is known as {\em weakly separated subsets of $[2k-1]$}\footnote{Here, and throughout we use $[n]$ to denote the set $\{1, 2, \dots, n\}$. Each vector corresponds to the set of coordinates of its 1 entries.}. 
The notion of weakly separated subsets was defined and studied by Leclerc and Zelevinsky \cite{LZ98} in the context of quasi-commuting quantum flag minors. 
They proved (Theorem 1.2 in \cite{LZ98}) that the maximal size of a family of weakly separated subsets of $[n]$ is $\frac{1}{2}(n+2)(n+1)+1$. 
Their inductive proof is elementary and takes about one page.
Using their result in our case (i.e. applying their bound with $[n]=[2k-1]$ twice; once for the patterns with $k$ 1's and once for the patterns with $k-1$ 1's) immediately gives an upper bound of  $4k^2+k+2 = O(k^2)$ on $\x$.
Leclerc and Zelevinsky also proved a tight bound of $k(n-k)+1$ for weakly separated $k$-subsets of $[n]$\footnote{Substitute $\ell=k$ into the statement of Theorem 1.3 in \cite{LZ98}.}. The proof of this bound in \cite{LZ98} is longer and more complicated. Using this bound in our case yields a slightly better upper bound of $2k^2-2k+2$.
In addition, ChatGPT gave an elegant and simple proof that directly gives the stronger bound on the size of a set of vectors in $\bits^n$ that exclude the forbidden configuration, without going through the stronger definition of weakly separated subsets of $[n]$\footnote{Without cardinality constraints, two weakly separated subsets correspond to pattens that avoid the forbidden configuration, but not the other way around.}. We clarified, further simplified, and cleaned that proof and present it in \cref{section:proof}.   
 
 \medskip
\noindent
{\bf Implications.}
 There are several implications of this new $\x = O(k^2)$ bound: 
 \begin{itemize}
 	\item {\bf Okamura--Seymour metric compression \cite{OS81}.} This problem  asks to compactly encode the $S$-to-$T$ distances between the vartices of a face $S$ and an arbitrary subset of vertices $T$ of an undirected unweighted planar graph $G$. A query $(v,s_i)$ to the encoding (with $v\in T$ and $s_i \in S$) returns the $v$-to-$s_i$ distance in $G$.
 Li and Parter~\cite{LP19} (see also comment in \cite{MWW22})  presented a compression of size $\tilde O(k\cdot \x+|T|)$ and $O(1)$ query time. This was improved by Mozes, Wallheimer and Weimann \cite{MWW22} to $\tilde{O}(\x+|T|)$ space with $\tilde{O}(1)$ query time. By plugging $\x=O(k^3)$ their bound was $\tilde{O}(k^3+|T|)$. Plugging in the new $O(k^2)$ bound immediately improves this to $\tilde{O}(k^2+|T|)$. 
 
 	\item {\bf Distance oracles with constant query time \cite{FHMWN21}.} This problem asks for a compact data structure that reports exact distances in $G$ (between any pair of vertices) in {\em constant} time. Fredslund-Hansen, Mozes and Wulff-Nilsen \cite{FHMWN21} showed how to use patterns to obtain such an oracle for undirected unweighted planar graphs with $O(n^{7/4})$ space. By plugging in $\x=O(k^2)$ their space immediately improves to $O(n^{5/3})$. 
 	
     \item {\bf Distributed diameter computation \cite{LP19}.}
This problem is in distributed computing, in the CONGEST model. It asks for the smallest number of rounds required to compute the diameter $D$ of an undirected unweighted planar graph. Li and Parter \cite{LP19} used patterns to obtain an algorithm requiring $\tilde{O}(D^5)$ rounds. 
Mozes, Wallheimer and Weimann \cite{MWW22} observed that the same algorithm can be easily tweaked to run in $\tilde{O}(D^4)$ rounds. By plugging in $\x=O(k^2)$ this immediately improves to $\tilde{O}(D^3)$ rounds. 
 
 	\item {\bf Centralized diameter computation}.
The current fastest algorithm to compute the diameter of a planar graph (by Gawrychowski, Kaplan, Mozes, Sharir and Weimann \cite{GKMSW18}) runs in $\Otilde(n^{5/3})$ time and  works even for weighted directed planar graphs. Improving its running time is a major open problem in planar graph algorithms. In \cref{section:diameter} we present a non-trivial use of the new $\x=O(k^2)$ bound to achieve a faster $\Otilde(n^{8/5})$ time algorithm for diameter in unweighted undirected planar graphs.
Thus currently there is gap between weighted and unweighted diameter computation times.
 \end{itemize}

\section{The ChatGPT Proof}\label{section:proof}
In this section we prove the main theorem:

\begin{theorem}\label{thm:main}
Let $G$ be a connected, undirected, unweighted planar graph with a designated face $f$. Let $k$ be the number of edges of the facial walk of $f$. Then $\x \in O(k^2)$.
\end{theorem}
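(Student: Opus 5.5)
The plan has three steps: reduce to binary patterns, sort them by their number of $+1$ entries, and bound each class by an induction on $n$.

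\textbf{Step 1: binary reduction.} Subdivide every edge of $G$. Distances double, and the facial walk of $f$ now has $2k$ edges. Every pattern of an original vertex $v$ becomes a vector in $\bits^{2k-1}$: consecutive distances along the subdivided walk differ by exactly $1$, since parities alternate on the bipartite subdivided graph. This map is injective on patterns, because each original difference is the sum of two consecutive new entries.

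\textbf{Step 2: fixed weight and the forbidden configuration.} Closing the walk, the $2k$ differences telescope to $0$. Hence the $2k-1$ recorded entries sum to $\pm 1$, so every pattern has either $k$ or $k-1$ entries equal to $+1$. Next I invoke the Li--Parter planarity observation: two patterns $u,v$ never admit indices $a<b<c<d$ with $u$ reading $(-1,1,-1,1)$ there and $v$ reading $(1,-1,1,-1)$. It therefore suffices to prove the following combinatorial lemma. A family $\cF\subseteq\bits^n$ in which all vectors have the same number $m$ of $+1$'s and no pair exhibits the forbidden configuration satisfies $|\cF| \le m(n-m)+1$. Applying it with $n=2k-1$ and $m\in\{k,k-1\}$ gives $2(k(k-1)+1)=O(k^2)$.

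\textbf{Step 3: the lemma, by induction on $n$.} Identify a vector with the subset $A\subseteq[n]$ of its $+1$ coordinates. Split $\cF$ by the value of the last coordinate, and map each vector to its first $n-1$ coordinates. A deletion/contraction count bounds $|\cF|$ by the images (sizes $\le m(n-1-m)+1$ and $\le (m-1)(n-m)+1$ by induction) minus their overlap. The two bounds sum to $m(n-m)+1 + \bigl(m(n-1-m)+1-m\bigr)$, too large by $m(n-1-m)+1-m$. The real work is showing that collisions make up this surplus.

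A cleaner route, which I would try first, is the chain argument. Each $A\in\cF$ is an $m$-subset whose $\pm1$ vector has exactly $m(n-m)$ inversions, meaning pairs ($+1$ before $-1$) with the $+1$ earlier. The forbidden configuration says that for $A,B$, the "$+1$ before $-1$" interleavings cannot alternate twice in opposite phases. From this I would show that $\cF$ admits a linear order in which consecutive elements are comparable in the Bruhat-type order on $m$-subsets (one $+1$ moves right past a $-1$), and that the inversion count strictly increases along the order. Any such chain lies between $\{1,\dots,m\}$ (0 inversions) and $\{n-m+1,\dots,n\}$ ($m(n-m)$ inversions). The strict increase would then give $|\cF|\le m(n-m)+1$.

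\textbf{Main obstacle.} The hard step is the comparability claim, since weak separation does not obviously yield a total order. If the chain argument fails, I would fall back on the Leclerc--Zelevinsky induction. To make it work I would choose which coordinate to delete, namely the position of the first $+1$ in some extremal member, and verify that the forbidden configuration forces the two restricted families to share at least $m(n-1-m)+1-m$ elements. Any bound of the form $O(nm)$ already suffices for the theorem, so if needed I would settle for a cruder overlap estimate giving, say, $|\cF|\le n(m+1)+1$ inductively.
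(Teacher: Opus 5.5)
\textbf{There is a genuine gap: the fixed-weight lemma is not proved.} Your Steps 1--2 match the paper's reduction: subdivide, use the telescoping sum to get exactly $k$ or $k-1$ entries equal to $1$, invoke Li--Parter, and reduce to the bound $|\cF|\le m(n-m)+1$ with $n=2k-1$. That lemma is the whole content of the theorem, and neither route you sketch can work as stated.

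\emph{The chain route rests on a false claim.} For $n=4$, $m=2$, the only forbidden pair of $2$-subsets is $\{1,3\},\{2,4\}$. So $\cF=\{\{1,2\},\{1,3\},\{1,4\},\{2,3\},\{3,4\}\}$ is a family of maximum size $m(n-m)+1=5$. Yet $\{1,4\}$ and $\{2,3\}$ are incomparable in the Gale/Bruhat order and have the same inversion count ($2$). Hence no ordering of $\cF$ with strictly increasing inversion count exists. Also, the inversion count is not constantly $m(n-m)$; that number counts mixed pairs.

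\emph{The induction route cannot produce the collisions you need.} Deleting any single coordinate sends vectors with $-1$ there to weight $m$ and vectors with $+1$ there to weight $m-1$. So the two restricted families are disjoint and $|\cF|$ equals the sum of their sizes. The recursion $f(n,m)\le f(n-1,m)+f(n-1,m-1)$ then yields only $\binom{n}{m}$. The surplus to recover is in fact $(m-1)(n-m-1)$, and choosing the deleted coordinate cleverly does not change the disjointness. The fallback bound $n(m+1)+1$ comes with no mechanism at all.

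\textbf{The missing idea in the paper is a prefix-trie count.} Since all vectors have length $n$, $|\cF|=b+1$, where $b$ is the number of branching nodes of the trie of prefixes. A branching prefix $P$ extends by both $1$ and $-1$. So its counts $(x(P),y(P))$ of $1$'s and $-1$'s lie in $[0,m-1]\times[0,n-m-1]$, a set of $m(n-m)$ points. This map is injective, by the following argument.
\begin{enumerate}
\item Suppose branching $P\ne Q$ have equal counts, hence equal length $\ell$.
\item Take the last differing position $j$, with $P_j=-1$ and $Q_j=1$. Equal counts give $i<j$ with $P_i=1$ and $Q_i=-1$.
\item Extend $P$ by $1$ to some $A\in\cF$ and $Q$ by $-1$ to some $B\in\cF$.
\item Equal total weight forces $\sum_{h>\ell+1}(A_h-B_h)=-2$. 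Hence some $h>\ell+1$ has $A_h=-1$ and $B_h=1$.
\item Then $i<j<\ell+1<h$ is a forbidden configuration, a contradiction.
\end{enumerate}

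\textbf{A smaller omission: non-simple facial walks.} The theorem allows a non-simple facial walk, and you apply Li--Parter directly. The paper first attaches a pendant vertex $v_i$ inside $f$ to each occurrence $s_i$ and joins the $v_i$ into a cycle. This makes the face simple and preserves all patterns.
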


 We first state and prove \cref{lem:uniform}, which bounds the maximum cardinality of a family of vectors in $\bits^n$ with exactly $r$ $1$'s.
 We then provide, for the sake of completeness, the already known arguments that imply the main theorem.

\begin{lemma}\label{lem:uniform}
Let $0\le r\le n$. The maximal size of a family $\cF$ of vectors in $\bits^n$, each with exacty $r$ 1's, that avoids the forbidden configuration is $r(n-r)+1$.
\end{lemma}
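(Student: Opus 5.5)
The plan is to prove the lower bound by an explicit construction and the upper bound by induction on $n$. Throughout, the forbidden configuration means two vectors $u,v\in\cF$ and indices $a<b<c<d$ with $u$ reading $(-1,1,-1,1)$ and $v$ reading $(1,-1,1,-1)$ at these positions.

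\textbf{Lower bound.} Take all vectors of the form $(-1)^{a}\circ 1^{x}\circ(-1)^{n-r-a}\circ 1^{r-x}$ with $0\le a\le n-r$ and $0\le x\le r$. Each such vector consists of at most four blocks in the order $-,+,-,+$.
\begin{itemize}
\item No vector of this form contains $(1,-1,1,-1)$ as a subsequence. The first $+1$ must lie in the second block or later, so the following $-1$ lies in the third block, the next $+1$ in the fourth block, and no $-1$ remains after that. Hence no pair of vectors in the family can realize the forbidden configuration.
\item Counting distinct vectors: the generic choices $1\le a\le n-r-1$, $1\le x\le r-1$ give $(n-r-1)(r-1)$ distinct vectors.
\item The choices with $a=0$ give $r+1$ further vectors.
\item The choices with $x=r$ and $1\le a\le n-r-1$ give $n-r-1$ new vectors.
\item All remaining parameter choices coincide with vectors already counted.
\item The total is $(n-r-1)(r-1)+(n-r-1)+(r+1)=r(n-r)+1$.
\end{itemize}
I would double-check the boundary cases $r\in\{0,n\}$ separately; there the family is a single vector, and $r(n-r)+1=1$ as required.

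\textbf{Upper bound.} I would induct on $n$ and aim for the recurrence $|\cF|\le f(n-1,r)+r$, where $f(n,r)=r(n-r)+1$. This suffices because $r(n-1-r)+1+r=r(n-r)+1$. The cases $r=0$ and $r=n$ are trivial, and the symmetric recurrence obtained by swapping the roles of $\pm1$ gives the alternative step $f(n-1,r-1)+(n-r)$, so I may choose whichever is convenient.

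To get the recurrence, I would define a map $\psi$ from $\cF$ to vectors in $\bits^{n-1}$ with exactly $r$ ones.
\begin{itemize}
\item If $v_n=-1$, let $\psi(v)$ be $v$ with its last coordinate deleted.
\item If $v_n=1$, delete the last coordinate and flip the rightmost $-1$ of $v$ to $+1$. Such a $-1$ exists because $r<n$.
\end{itemize}
There are then two things to verify.
\begin{enumerate}
\item The image $\psi(\cF)$ still avoids the forbidden configuration, so the induction hypothesis bounds it by $f(n-1,r)$. This requires a case analysis on whether the four witnessing indices involve the flipped positions. The key point is that flipping the rightmost $-1$ to $+1$, while removing a trailing $+1$, moves a $+1$ leftward past only $+1$'s. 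I expect this to preserve every alternating subsequence used in a potential violation.
\item The map $\psi$ loses at most $r$ elements, that is, $|\cF|-|\psi(\cF)|\le r$. A collision $\psi(u)=\psi(w)$ can only happen when $u_n=1$, $w_n=-1$, and $w$ is obtained from $u$ by moving its final $+1$ to the position $j$ of $u$'s rightmost $-1$.
\end{enumerate}

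\textbf{Main obstacle.} The step I expect to be hardest is the collision count in item 2. My plan is to show that distinct colliding pairs have distinct positions $j$, and more strongly that these positions interleave with the $+1$'s in a way that charges each collision to a distinct one among the $r$ ones. The argument would go as follows: if two colliding pairs $(u,w)$ and $(u',w')$ had incompatible positions $j<j'$, then I expect to locate indices $a<b<c<d$ among $j$, $j'$, $n$ and a position where the two patterns first differ, which exhibits $(-1,1,-1,1)$ in one vector and $(1,-1,1,-1)$ in another, contradicting that $\cF$ avoids the configuration. If this charging turns out to be awkward, my fallback is to use the symmetric map (flipping $-1$'s instead) together with the alternative recurrence through $f(n-1,r-1)$, and to pick whichever version makes the injectivity defect easier to bound.
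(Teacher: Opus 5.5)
Your lower-bound family is correct: every vector has the shape $(-1)^*1^*(-1)^*1^*$, so none contains $(1,-1,1,-1)$ as a subsequence, and your count is right. The paper's proof only establishes the upper bound, so this part usefully supplements it. Your first claim, that $\psi(\cF)$ avoids the configuration, is also true. It is short once you note that in both cases $\psi(v)$ is just $v$ with its last $-1$ deleted. Suppose $\psi(u),\psi(v)$ read $(-1,1,-1,1)$ and $(1,-1,1,-1)$ at $a<b<c<d$. Then $v$ agrees with $\psi(v)$ up to $d<j_v$, and $u$ agrees with $\psi(u)$ up to $c<j_u$. Either $d<j_u$, and $u,v$ violate at $a,b,c,d$; or $j_u\le d<j_v$, and they violate at $a,b,c,j_v$.

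The genuine gap is the step you flagged, and it rests on a false description of collisions. It is not true that $\psi(u)=\psi(w)$ forces $u_n=1$ and $w_n=-1$: for example $y\,(-1)\,1\,1$ and $y\,1\,(-1)\,1$ both map to $y\,1\,1$. In general, write $x=y\,1^m$ with $y$ empty or ending in $-1$. The fiber over $x$ consists of the $m+1$ vectors obtained by inserting a $-1$ into the final run of $1$'s of $x$. Any subset of this fiber can lie in $\cF$, since fiber-mates differ in only two coordinates, and the fiber need not contain $x\circ(-1)$. So counting pairs of your special form does not bound $|\cF|-|\psi(\cF)|$. In any case, the charging is only announced, not carried out.

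The missing idea is the right charging. In each fiber, keep the member whose last $-1$ is rightmost, and charge every other member $u$ to the position $j(u)$ of its last $-1$. Such $u$ ends in $1$ and has only $r$ ones, so $j(u)\in\{n-r,\dots,n-1\}$, a set of size $r$. Suppose distinct charged vectors $u=z\,(-1)\,1^{n-j}$ and $u'=z'\,(-1)\,1^{n-j}$ had the same $j$. Then $z\ne z'$ have equally many ones, so there are $p,q<j$ with $z_p=z'_q=1$ and $z_q=z'_p=-1$; say $p<q$. The kept vector $w$ of $u$'s fiber agrees with $z$ before $j$, has $w_j=1$, and has $w_h=-1$ for some $h>j$, whereas $u'_h=1$. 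So $w$ and $u'$ read $(1,-1,1,-1)$ and $(-1,1,-1,1)$ at $p<q<j<h$. If $q<p$, use $u$ and the kept vector of $u'$'s fiber instead. Hence $|\cF|-|\psi(\cF)|\le r$, and your induction closes.

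With this repair, your argument is a genuinely different route from the paper's. The paper's proof is non-inductive: it injects the branching nodes of the prefix trie into the $r\times(n-r)$ grid indexed by (number of $1$'s, number of $-1$'s), and uses $|\cF|=b+1$. The core contradiction is the same in both proofs: two prefixes with equal counts that differ in both directions, one continued by $1$ and the other by $-1$, plus a fourth index further right.
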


\begin{proof}
Assume that $1\le r\le n-1$, as otherwise the claim is trivial.

\medskip 
\noindent\textbf{Step 1: encode the family by a prefix trie.}
Let $T$ be the rooted prefix trie whose nodes are all prefixes of the vectors in $\cF$ (the root corresponds to the empty prefix), with children given by appending $1$ or $-1$ whenever the extended prefix occurs. All vectors have length $n$, so the leaves of $T$ are exactly the vectors of $\cF$, and the number of leaves is $L=|\cF|$.
Call an internal node \emph{branching} if it has both children, and let $b$ and $u$ be the numbers of branching and single-child internal nodes. The tree has $u+b+L$ nodes, hence $u+b+L-1$ edges, and counting edges by out-degree gives $u+2b$. Therefore
\begin{equation}\label{eqation:leaves-branches}
 L=b+1 ,
\end{equation}
and it suffices to prove $b\le r(n-r)$.

\medskip
\noindent\textbf{Step 2: map branching nodes into a rectangle.}
For a binary prefix $P$ set $x(P)=\#\{\text{1's in }P\}$ and $y(P)=\#\{\text{-1's in }P\}$. If $P$ is branching, its $-1$-child extends to a full vector with exactly $n-r$ -1's, so $y(P)+1\le n-r$; its $1$-child extends to a vector with exactly $r$ 1's, so $x(P)+1\le r$. Hence every branching node satisfies
\[
 0\le y(P)\le n-r-1,\qquad 0\le x(P)\le r-1 ,
\]
and the location map $\lambda(P)=(x(P),y(P))$ sends branching nodes into a rectangle of exactly $r(n-r)$ lattice points (see \cref{figure:trie}).\\

\begin{figure}[h!]
\centering
\begin{tikzpicture}[scale=0.95,>=Stealth]
  \fill[black!7] (0,0) rectangle (4,2);
  \draw[step=1, black!25, very thin] (0,0) grid (5,3);
  \draw[->] (0,0) -- (5.6,0) node[right] {$x=\#\text{1's}$};
  \draw[->] (0,0) -- (0,3.6) node[above] {$y=\#-1\text{'s}$};
  \foreach \x in {1,...,5} \node[below] at (\x,-0.05) {\footnotesize \x};
  \foreach \y in {1,...,3} \node[left] at (-0.05,\y) {\footnotesize \y};
  \draw[blue!60!black, very thick, rounded corners=1.5pt]
    (0.00,0.00) -- (1.00,0.00) -- (1.00,1.00) -- (2.00,1.00) -- (2.00,2.00) -- (3.00,2.00) -- (3.00,3.00) -- (5.00,3.00);
  \draw[red!65!black, very thick, rounded corners=1.5pt]
    (0.05,0.05) -- (1.05,0.05) -- (1.05,1.05) -- (2.05,1.05) -- (2.05,2.05) -- (3.05,2.05) -- (4.05,2.05) -- (4.05,3.05) -- (5.05,3.05);
  \draw[green!45!black, very thick, rounded corners=1.5pt]
    (0.10,0.10) -- (1.10,0.10) -- (1.10,1.10) -- (2.10,1.10) -- (3.10,1.10) -- (4.10,1.10) -- (4.10,2.10) -- (4.10,3.10) -- (5.10,3.10);
  \fill[black] (2.05,1.05) circle (3.0pt);
  \fill[black] (3.03,2.03) circle (3.0pt);
  \node[anchor=north west] at (2.16,0.96) {\footnotesize branch};
  \node[anchor=south east] at (2.90,2.16) {\footnotesize branch};
  \node[black!60] at (2.0,-0.66) {\footnotesize shaded rectangle $[0,n{-}r{-}1]\times[0,r{-}1]$};
\end{tikzpicture}
\caption{A family $\cF$ of vectors with $n=8$ and $r=5$ as monotone lattice paths ($1=$ right, $-1=$ up), sharing prefixes as in the trie; slight offsets distinguish coinciding segments. Branching prefixes (dots) are confined to the shaded $ (n-r)\times r$ rectangle, and the proof of \cref{lem:uniform} shows that no two branching prefixes occupy the same lattice point; hence $b\le r(n-r)$ and $|\cF|=b+1$.}
\label{figure:trie}
\end{figure}

\noindent\textbf{Step 3: \boldmath$\lambda$ is injective on branching nodes.}
Suppose two distinct branching nodes $P\ne Q$ satisfy $\lambda(P)=\lambda(Q)$; then both $P$ and $Q$ have the same number of 1's and $-1$'s, so they have the same length $\ell$. 
Let $j\le\ell$ be the last position at which they differ; interchanging $P$ and $Q$ if necessary, assume $P_j=-1$, $Q_j=1$. Since $P$ and $Q$ contain the same number of 1's and agree after position $j$, the excess $1$ at position $j$ is cancelled earlier: there is $i<j$ with $P_i=1$, $Q_i=-1$.
As $P$ is branching, its $1$-child occurs in the trie, so we can choose $A\in\cF$ that starts with $P$ followed by $1$. As $Q$ is branching, we can choose $B\in\cF$ that starts with $Q$ followed by -$1$. At coordinates $i,j,\ell+1$:
\[
\begin{array}{c|ccc}
 & i & j & \ell+1\\ \hline
 A & 1 & -1 & 1\\
 B & -1 & 1 & -1
\end{array}
\]
Since $P$ and $Q$ have equally many 1's and $-1$'s, $\sum_{h\le\ell}\bigl(A_h-B_h)\bigr)=0$, while coordinate $\ell+1$ contributes $2$; as both $A$ and $B$ contain $r$ 1's and $n-r$ $-1$'s, the sum over all $n$ coordinates is also $0$. Hence
\begin{equation}\label{equation:suffix-balance}
 \sum_{h=\ell+2}^{n}\bigl(A_h-B_h\bigr)=-2 .
\end{equation}
If no coordinate $h>\ell+1$ had $A_h=-1$, $B_h=1$, every summand in \eqref{equation:suffix-balance} would be $0$ or $1$, which cannot sum to $-2$. So such an $h$ exists, and the four indices $i,j,\ell+1,h$ form an illegal configuration, a contradiction. 
Thus $\lambda$ is injective, $b\le r(n-r)$, and \eqref{eqation:leaves-branches} completes the proof.
\end{proof}

\begin{proof}[Proof of \cref{thm:main}]
We assume $G$ has no self loops since those do not affect distances. We also assume the designated face $f$ is simple. This can be done without loss of generality by the following transformation. 
Let $W$ be the boundary cyclic walk of $f$. Let $s_0, s_2, \dots , s_{k-1}$ be the vertices of $f$ along $W$. Note that if $W$ is not a simple cycle then the $s_i$'s are not distinct. 
For each $i$, add a new distinct pendant vertex $v_i$ embedded inside $f$ with and edge $s_iv_i$ embedded between $s_{i-1}s_i$ and $s_is_{i+1}$ in the cyclic order around $s_i$ (indices are modulo $k$). 
Now connect all the new pendant vertices with edges $v_iv_{i+1}$. 
See \cref{fig:inflate}. This creates a new face $f'$ whose facial walk is a simple cycle of length $k$ through the newly added vertices. 
This construction does not change the distance between any two vertices of the original $G$. 
Furthermore, For any vertex $u$ of the original $G$ and any $0\leq i \leq \ell$, the $u$-to-$s_i$ distance in $G$ equals the $u$-to-$v_i$ distance in the modified graph. 
Hence, the number of patterns for $f$ is bounded by the number of patterns for the new simple face $f'$ in the modified graph.

\begin{figure}[t]
\centering
\begin{tikzpicture}[scale=1.5,>=Stealth]
  \coordinate (a) at (90:1);
  \coordinate (b) at (162:1);
  \coordinate (c) at (234:1);
  \coordinate (d) at (306:1);
  \coordinate (e) at (18:1);
  \coordinate (g) at (2.05,0.75);
  \coordinate (h) at (2.75,0.05);
  \coordinate (i) at (2.05,-0.65);
  \coordinate (f) at (-1.35,-1.55);
  \coordinate (p) at (0,0.25);
  \coordinate (ca)  at (0,1.32);
  \coordinate (cb)  at (-1.25,0.42);
  \coordinate (cc1) at (-1.12,-0.42);
  \coordinate (cf)  at (-1.52,-1.92);
  \coordinate (cc2) at (-0.52,-1.22);
  \coordinate (cd)  at (0.62,-1.12);
  \coordinate (ce1) at (1.32,-0.72);
  \coordinate (ci)  at (2.05,-0.98);
  \coordinate (ch)  at (3.08,0.05);
  \coordinate (cg)  at (2.08,1.05);
  \coordinate (ce2) at (1.28,0.72);
  \draw[thick] (a)--(b)--(c)--(d)--(e)--(a);
  \draw[thick] (e)--(g)--(h)--(i)--(e);
  \draw[thick] (c)--(f);
  \draw[black!55] (a)--(p);
  \draw[red!70!black,thick] plot [smooth cycle,tension=0.8] coordinates
    {(ca) (cb) (cc1) (-1.72,-1.32) (cf) (-0.95,-1.75) (cc2)
     (cd) (ce1) (ci) (ch) (cg) (ce2) (0.52,1.12)};
  \foreach \s/\t in {ca/a, cb/b, cc1/c, cf/f, cc2/c, cd/d, ce1/e, ci/i, ch/h, cg/g, ce2/e}{
    \draw[red!70!black] (\s)--(\t);
  }
  \node[red!70!black] at (-0.68,1.42) {};
  \foreach \pt in {a,b,c,d,e,f,g,h,i}{ \fill (\pt) circle (1.5pt); }
  \fill[black!55] (p) circle (1.4pt);
  \foreach \pt in {ca,cb,cc1,cf,cc2,cd,ce1,ci,ch,cg,ce2}{ \draw[red!70!black,fill=white] (\pt) circle (1.7pt); }
  \node[above=2pt, fill=white, inner sep=1pt] at (a) {};
  \node[left=1.5pt] at (b) {};
  \node[below right=0.5pt] at (c){} ;
  \node[below=1.5pt, fill=white, inner sep=1pt] at (d) {};
  \node[left=4pt] at (e){};
  \node[below right=0.5pt] at (f){};
  \node[above left=0.5pt, fill=white, inner sep=1pt] at (g) {};
  \node[below=2.5pt] at (h) {};
  \node[below left=0.5pt] at (i) {};
  \node[right=1.5pt,black!55] at (p) {};
\end{tikzpicture}
\caption{Turning the non simple infinite face $f$ (in black) into a simple face $f'$ (in red).}
\label{fig:inflate}
\end{figure}

We therefore continue under the assumption that $f$ is simple. 
We subdivide each edge of $G$ by introducing an artificial vertex in the middle of each edge. 
This makes the graph bipartite. In particular, the size of the distinguished face $f$ becomes $2k$, and the distance between each pair of original vertices of $G$ doubles. 
Since each pattern in the subdivided $G$ corresponds to exactly one pattern in the original $G$, it therefore suffices to bound the number of patterns in the subdivided $G$. 
Hence, for the remainder of the proof when we refer to $G$ we mean the subdivided $G$.
Since $G$ is unweighted undirected and bipartite, distances from any vertex $u$ to adjacent vertices on the distiguished face $f$ differ by exactly $1$. Hence, each pattern is a vector in $\bits^{2k-1}$.

Denote $s_k=s_0$. Then by cyclicity, for any vertex $v$, the sum $\sum_{i=1}^k d_G(v,s_i) - d_G(v,s_{i-1})$ is 0. 
Hence, by definition of the pattern $p^v$, summing all the entries of $p^v$ yields $\pm 1$, so the number of $1$ entries in $p^v$ is either $k-1$ or $k$.

Li and Parter \cite[Section 3]{LP19} proved that the set ${\mathcal P}$ of patterns avoids the illegal configuration.
Hence, ${\mathcal P}$ can be partitioned into two disjoint sets of patterns $ {\mathcal P}_k \cup  {\mathcal P}_{k-1}$, where $ {\mathcal P}_k$ (resp., $ {\mathcal P}_{k-1})$ contains patterns $p$ with exactly $k$ 1's (resp., $k-1$ 1's) and satisfies \cref{lem:uniform} with $n=2k-1$ and $r=k$ (resp., $r=k-1$).  
Hence, invoking \cref{lem:uniform}, we get $\x = | {\mathcal P}| = | {\mathcal P}_k| + | {\mathcal P}_{k-1}| \leq (k(k-1)+ 1) + ((k-1)k + 1) = 2k^2 -2k+2 =  O(k^2)$. 
\end{proof}

\section{The Implication to Diameter Computation}\label{section:diameter}

The diameter of a directed weighted planar graph can be computed in truly subquadratic time: Cabello \cite{Cab19} gave the first such algorithm, and the current fastest is the $\Otilde(n^{5/3})$ time algorithm of Gawrychowski, Kaplan, Mozes, Sharir and Weimann \cite{GKMSW18}.  
For \emph{unweighted undirected} planar graphs no uniform improvement has been known. Abboud, Mozes and Weimann \cite{AMW23} gave algorithms with running times $\Otilde(nD^2)$ and $n^{3+o(1)}/D^2$, which beat $\Otilde(n^{5/3})$ only when the diameter $D$ is below $n^{1/3}$ or above $n^{2/3}$. 
In this section we show that the new quadratic pattern bound implies such a uniform improvement. 
Throughout this section $G$ is simple; parallel edges and loops do not affect distances and may be discarded.

\begin{theorem}[Unweighted planar diameter]\label{thm:diameter}
There is an $\Otilde(n^{8/5})$ time  algorithm for computing 
the diameter, the radius, and all vertex eccentricities of an undirected unweighted $n$-vertex planar graph.
\end{theorem}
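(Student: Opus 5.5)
We follow the $r$-division framework of Gawrychowski et al.\ and of the constant-time oracle of Fredslund-Hansen, Mozes and Wulff-Nilsen, but use patterns in place of additively weighted Voronoi diagrams. We fix a parameter $r$ and compute an $r$-division with few holes: $O(n/r)$ pieces, each with $O(r)$ vertices and $O(\sqrt r)$ boundary vertices lying on $O(1)$ holes. For every vertex $u$, its eccentricity is $\max_v d(u,v)$. We split this maximum according to whether $v$ lies in the piece $R$ of $u$ or outside it. The vertices $v$ in the same piece are handled in two ways: by BFS inside $R$, and by combining with distances through $\partial R$. The latter is a min over $O(\sqrt r)$ boundary vertices, which costs $O(n\sqrt r)$ in total if done naively over the piece. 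The real work is in the case where $v$ lies in a different piece $R'$, with $u\notin R'$.

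For a piece $R'$ with hole boundary $S$ of size $k=O(\sqrt r)$, every shortest path from $u$ to $v\in R'$ passes through $S$. Hence $d(u,v)=d(u,s_0)+\min_i\big(d(u,s_i)-d(u,s_0)+d(s_i,v)\big)$. The term $d(u,s_i)-d(u,s_0)$ is a prefix sum of the pattern $p^u$ with respect to $S$. So $\max_{v\in R'} d(u,v)-d(u,s_0)$ depends only on the pattern of $u$. The face here is the hole face of $R'$, and we apply \cref{thm:main} to the graph $G$ with that hole treated as a designated face. To make this legitimate, we contract or triangulate the exterior of the hole, or we use the standard reduction where the complement of a hole is embedded inside a face. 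The theorem then gives only $O(k^2)=O(r)$ distinct patterns per hole. For each piece we therefore do the following:
\begin{enumerate}
\item Run a BFS from every boundary vertex of $R'$. This costs $O(n\sqrt r)$ per piece and $O(n^2/\sqrt r)$ in total. That total is too much, so we must instead reuse the boundary-to-all distances computed by FR-Dijkstra or MSSP. Alternatively, we compute patterns of all $u$ only implicitly.
\item Group the vertices $u$ by pattern.
\item For each distinct pattern $p$, compute $\max_{v\in R'}\min_i(\mathrm{pref}_i(p)+d(s_i,v))$. With precomputed $S\times R'$ distances this costs $O(r\sqrt r)$ per pattern, so $O(r^{2.5})$ per piece and $O(nr^{1.5})$ in total.
\end{enumerate}

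The bottleneck is computing, for every vertex $u$ and every piece $R'$, the pattern identifier of $u$ with respect to $R'$. Naively this takes $n\cdot (n/r)\cdot\sqrt r$ time. We plan to avoid that cost by building a trie over the patterns of each piece, as in Step~1 of \cref{lem:uniform}, with $O(r)$ leaves. We locate $u$ in the trie using the $O(\sqrt r)$ distances from $u$ to $S$. Those distances come from an oracle with $O(1)$ query time or from MSSP, so the location costs $\Otilde(\sqrt r)$ per (vertex, piece) pair, giving $\Otilde(n^2/\sqrt r)$ overall. Balancing $n^2/\sqrt r$ against $n r^{1.5}$ gives $r=n^{1/2}$ and time $n^{7/4}$, which is not good enough. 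To reach $n^{8/5}$ we need a better bound on the location cost. The first refinement we would try is a recursive or two-level division. In it, $u$ is located via the patterns of its own piece's boundary vertices: the pattern of $u$ with respect to $S'$ is determined by $d(u,b)$ for $b\in\partial R$, and these are offsets to the patterns of the boundary vertices $b$. The location then costs $O(\sqrt{r}\cdot\sqrt{r})$ per pair of pieces plus $O(1)$ per (vertex, piece) pair via tabulation. The cost becomes roughly $n^2/r$ plus the computation of $(n/r)^2$ piece-pairs $\times$ $r$, together with the $nr^{1.5}$ term. Balancing $n^2/r=nr^{3/2}$ gives $r=n^{2/5}$ and total time $n^{8/5}$, which is the target.

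The main obstacle is this location step. We need to determine the pattern of $u$ with respect to a far hole in $O(1)$ amortized time. This needs the fact that the pattern is determined by the minimum over $b\in\partial R$ of $d(u,b)$ plus the known $b$-to-$S'$ distances. We would show that a table indexed by (pattern of $u$ with respect to $\partial R$, piece $R'$) suffices: there are $O(r)$ patterns per piece, each handled in $\Otilde(r)$ time, over $(n/r)^2$ pairs. The eccentricity of $u$ is then the maximum, over pieces, of $d(u,s_0)$ plus the looked-up value. Radius and all eccentricities follow directly from these per-vertex maxima.
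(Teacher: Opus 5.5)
There is a genuine gap, in two places, although your overall shape matches the paper: per piece, you group the outside vertices by pattern and answer one query per pattern. \textbf{First, the pattern count.} You treat the hole of $R'$ as a face with $k=O(\sqrt r)$ vertices and conclude that \cref{thm:main} gives $O(r)$ patterns per piece. In an unweighted graph this is not available. \cref{thm:main} needs a face of the graph in which distances are measured, with consecutive face vertices adjacent. The facial walk of the hole containing $\partial R'$ can have $\Theta(r)$ vertices, and you cannot triangulate, contract, or add artificial edges without changing distances. The weighted fix, infinite-weight edges between boundary vertices, destroys the $\{-1,0,1\}$ structure that patterns rely on. Restricting the pattern to the $O(\sqrt r)$ boundary vertices does not help either: consecutive boundary vertices need not be adjacent, and \cref{thm:main} then says nothing. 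The paper instead works in the graph $G'$ obtained by deleting the vertices of $R'$ that are off this facial walk. It obtains $O(k^2)=O(r^2)$ patterns, and it uses the fact that equal patterns in $G'$ imply equal patterns in $G$. With $O(r^2)$ patterns, your brute-force evaluation of $\max_{v\in R'}\min_i(\cdot)$ at $O(r^{1.5})$ per pattern costs $O(r^{3.5})$ per piece, which is $n^2$ overall at $r=n^{2/5}$. The paper does not replace Voronoi diagrams by patterns; it uses both. Per region it builds the Voronoi-diagram structure of \cite{GKMSW18} in $\Otilde(r^2)$ time. For each pattern's representative, it converts the $G'$-distances to $\partial R'$ into $G$-distances with FR-Dijkstra and issues one Voronoi query, each step taking $\Otilde(\sqrt r)$ time. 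This gives $\Otilde(r^{2.5})$ per region.

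\textbf{Second, the location step.} You correctly identify this as the bottleneck, but it is left unresolved. Your trie version costs $\Otilde(n^2/\sqrt r)$. The two-level table, by your own accounting ($(n/r)^2$ piece pairs, $O(r)$ patterns each, $\Otilde(r)$ per pattern), costs $\Otilde(n^2)$, not the $n^2/r$ you use in the balancing. Computing the induced vector $\min_{b\in\partial R}(d(u,b)+d(b,s'))$ over all $s'\in\partial R'$ is already $O(r)$ work per (pattern, piece) pair. The missing idea is that no vertex needs to be located individually. For each region, the algorithm of \cite{MWW22} enumerates all distinct patterns with respect to the $O(r)$-length face in $\Otilde(n)$ time, together with a representative maximizing the distance to $s_0$. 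It does this by traversing the graph and maintaining fingerprints in a binary tree over the current pattern, exploiting that adjacent vertices have patterns differing in at most two positions. This costs $\Otilde(n^2/r)$ over all regions. Balancing it against $\Otilde(nr^{1.5})$ gives $r=n^{2/5}$ and the $\Otilde(n^{8/5})$ bound.
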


\begin{proof}
Our algorithm incorporates the notion of patterns into the framework of \cite{Cab19,GKMSW18}. In that framework, first an {\em $r$-division} is performed in linear time \cite{ShayrDivision}. This is a partition of the graph $G$ into $O(n/r)$ {\em regions} where each region $R$ contains $O(r)$ vertices and $O(\sqrt{r})$ boundary vertices $\partial R$ (vertices shared by other regions). Then, for each region $R$, the {\em Voronoi diagram} data structure of \cite{GKMSW18} is built in $\Otilde(r^2)$ time (so $\Otilde(n\cdot r)$ time over all regions). A query to this data structure is a weight assignment $w(\cdot)$ to the boundary vertices $\partial R$. 
It returns, in $\Otilde(\sqrt{r})$ time, a list indicating for each boundary vertex $u \in \partial R$, the furthest vertex from $u$ in $R$ among all the vertices in $u$'s {\em Voronoi cell}.  
The Voronoi cell of $u$ contains all vertices $v \in R$ such that $u$ is the boundary vertex that minimizes the quantity $w(u)+d_R(u,v)$.

The Voronoi diagram data structure is used as follows: For every vertex $v$ of $G$, for every region $R$ of the $r$-division, a query is performed on the Voronoi diagram of $R$ with the weights $w(\cdot)$ being the distances in $G$ from $v$ to $\partial R$. This way, the  query reveals the furthest vertex from  $v$ in $G$ among all vertices of $R$ (and the maximum such value is returned as $G$'s diameter).\footnote{A special treatment is required to handle the specific region $R$ that contains $v$. It is possible that the path from $v$ to its furthest in $R$ does not touch $\partial R$. To handle this case, we simply run Dijkstra's algorithm from $v$ inside $R$ after initializing the vertices of $\partial R$ with the aforementioned $w(\cdot)$. Dijkstra takes $\Otilde(r)$ time
for every $v$ and so $\Otilde(n\cdot r)$ time
overall.} 
Computing the weights $w(\cdot)$ from $v$ to all boundary nodes of all regions takes $\Otilde(n/\sqrt{r})$ time \cite{FR06}. The queries performed for each $v$ require $\Otilde(\sqrt{r})$ time per region so $\Otilde(n/ \sqrt{r})$ time over all regions. The total running time is thus    
$\Otilde(n^2/ \sqrt{r}+ n\cdot r)$ which is  $\Otilde(n^{5/3})$ by taking $r=n^{2/3}$.

Our idea is to apply the Voronoi queries on $R$ not for every vertex of $G$ but only for every unique pattern in the graph $G\setminus R \cup \partial R$ with respect to the face $\partial R$.\footnote{We assume here that $\partial R$ lies on a single face of $G\setminus R \cup \partial R$. In reality, it may lie on a constant number of faces \cite{ShayrDivision}. Handling this case is standard and is described in \cite{GKMSW18}.} 
Specifically, among all vertices with the same pattern, we query the Voronoi diagram of $R$ just for the {\em representative} vertex $v$ that maximizes $d_{G\setminus R \cup \partial R}(v,s_0)$ where $s_0$ is an arbitrary chosen vertex of $\partial R$ that defines the starting vertex of the pattern.
The observation is (cf. \cite[Corollary 9]{FHMWN21}) that if two vertices have the same pattern with respect to distances in $G\setminus R \cup \partial R$, then they have the same pattern w.r.t. distances in all of $G$ as well. 
Hence, this choice of representative $v$ guarantees that $d_G(v,u)\le d_G(v',u)$ for every $u \in R$ and every $v'$ that has the same pattern as $v$. 
Thus, for the sake of computing the eccentricity of $u$, considering just the representative $v$ suffices.

 There is another issue however. While the vertices of $\partial R$ lie on a face of the the graph $G\setminus R \cup \partial R$, this face does not necessarily consist of just the $O(\sqrt r)$ vertices of $\partial R$, but may contain many (up to $O(r)$) additional vertices. In the weighted setting, this is easily handled by adding artificial edges of weight infinity that connect just the boundary vertices $\partial R$.  
 In the unweighted setting however, we cannot add these artificial weighted edges (as then the graph is no longer unweighted and so patterns do not apply). Instead, we will only use the (trivial) fact that there is a {\em facial walk} in $R$ (a walk that visits all vertices of $\partial R$) that is of length $k=O(r)$. See Lemma~3 in \cite{FHMWN21} and the preceding discussion there. 
 In other words, we use the $\x=O(k^2)$ bound on the number of unique patterns where $O(k^2)=O(r^2)$, not $O((\sqrt r)^2)$. 
 
Formally, we find the unique patterns in 
the graph $G'$ obtained from $G$ by removing all vertices of $R$ that are not part of the facial walk of $R$ that contains the boundary vertices $\partial R$. 
This guarantees that indeed the boundary vertices $\partial R$ lie on a face of size $O(r)$ of $G\setminus R \cup \partial R$. 
The randomized algorithm of \cite{MWW22} can find all $O(k^2)=O(r^2)$ patterns in this graph in $\Otilde(n)$ time (and can easily find for each pattern the appropriate representative $v$).

For each unique pattern with representative $v$, we extract from it the $v$-to-$\partial R$ distances in $G'$. 
We then use \cite{FR06}\footnote{To use \cite{FR06} we need the distances in $R$ between every pair of vertices of $\partial R$. These distaces can be computed in $O(r\log r)$ time using the MSSP algorithm~\cite{Klein05,CabelloCE13}. Over all regions, this adds only $O(\frac{n}{r}\cdot r \log r) = \Otilde(n)$ to the running time.
}  
to compute, in $\Otilde (\sqrt{r})$ time, the $v$-to-$\partial R$ distances in the entire graph $G$.  Over all $O(r^2)$ unique patterns, this takes $O(r^{2.5})$ time. 
Finally,
for each unique pattern with representative $v$ we query in $\Otilde(\sqrt{r})$ time the Voronoi diagram data structure (with weights $w(\cdot)$ being the $v$-to-$\partial R$ distances in $G$) for the furthest vertex from $v$ in $R$ in each Voronoi cell of the Voronoi diagram. 
The maximum among those gives us the furthest vertex $u_v$ from $v$ in $R$. 
The $v$-to-$u_v$ distance in $G$ can be obtained by adding the distance from $v$ to the corresponding vertex of $\partial R$ to the distance in $R$ from that vertex to $u_v$ (returned by the query to the Voronoi diagram data structure).  
Over all $O(r^2)$ unique patterns, this also takes $O(r^{2.5})$ time. 

To conclude, there are $O(n/r)$ regions and for each region $R$ we spend $\Otilde (n+r^{2.5})$ time so overall $\Otilde (n^2/r+n\cdot r^{1.5})$. Taking $r$ to be $n^{2/5}$ gives $\Otilde (n^{8/5})$ time for computing the diameter. 

\paragraph{Derandomization.}
The algorithm of \cite{MWW22} is randomized since it uses Karp-Rabin fingerprints~\cite{KarpRabin} to identify  unique patterns. Specifically, they first prove that the patterns $p^u,p^v$ of two {\em adjacent} vertices $u,v$ differ by at most two bits. Then, they traverse the graph, recording the patterns of   vertices as they are traversed. During the traversal, they maintain a complete binary tree $T$ on top of the current pattern. Each node of $T$ contains the Karp-Rabin fingerprint of its subtree (in particular, the root contains the fingerprint of the entire pattern). This way, when moving from a vertex $u$ to its neighbor $v$, at most $2 \log k$ nodes in $T$ change their fingerprint (and each of them can be computed in constant time from the two fingerprints of its children, by~\cite{KarpRabin}). 

We note here that the (randomized) Karp-Rabin fingerprints can easily be replaced with a simple  integer fingerprint scheme using a (deterministic) dictionary implemented, say, with a BST. 
The fingerprint of a leaf of $T$ is just the bit at that leaf.
The fingerprint of an internal node is determined from the pair of fingerprints of its two children $(a,b)$: 
If the pair $(a,b)$ has occurred before (i.e., the pair already appears in the BST), use the fingerprint stored for this pair in the BST; 
otherwise, store this pair in the BST and assign to it a new fingerprint (the least integer not used so far). 
Since the initial tree $T$ has $O(k)$ nodes and each new pattern changes only $2\log k$ nodes, the total number of distinct fingerprints required is $O(k^2\log k)$. 
\end{proof}

\bibliographystyle{plain}

\end{document}